\newtheorem{theorem}{Theorem}[section]
\newtheorem{definition}[theorem]{Definition}
\newtheorem{example}[theorem]{Example}
\newtheorem{proposition}[theorem]{Proposition}
\begin{document}
	
	\begin{center}
		\LARGE\textbf{Foundations of information theory for coding theory}
		
		\vspace{0.5cm}
		\large\textbf{El Mahdi Mouloua ~~~~~Essaid Mohamed}  \\
		\large\textbf{Lecture note}  \\
		\vspace{0.2cm}
	\end{center}
	\begin{abstract}
Information theory is introduced in this lecture note with a particular emphasis on its relevance to algebraic coding theory. The document develops the mathematical foundations for quantifying uncertainty and information transmission by building upon Shannon's pioneering formulation of information, entropy, and channel capacity. Examples, including the binary symmetric channel, illustrate key concepts such as entropy, conditional entropy, mutual information, and the noisy channel model. Furthermore, the note describes the principles of maximum likelihood decoding and Shannon's noisy channel coding theorem, which characterizes the theoretical limits of reliable communication over noisy channels. Students and researchers seeking a connection between probabilistic frameworks of information theory and structural and algebraic techniques used in modern coding theory will find this work helpful.
	\end{abstract}
	\section{Introduction}
In 1948, Claude Shannon developed an information theory that revolutionized the field. According to his work \cite{Shannon1948}, Shannon proposed a mathematical theory of communication. It reveals that we can achieve a reliable communication through a noisy channel by determining the limits of feasible data that we can send while being almost certain of recovering the original message from the distorted signal. In this section we present a brief description to information theory. For more details on the theory of information authors can refer to \cite{resa, Shannon1948, nober, ball}. \\
Let's consider a probabilistic experiment in which a discrete random variable $X$ is observed. Let $x_{1},x_{2},\ldots,x_{N}$ the possible values that can $X$ takes, with probabilities $p_{1},p_{2},\ldots,p_{N},$ respectively. In this situation we dispose of a probability function such that : \[\mathbb{P}(X=x_{i})=p_{i}.\]
In the context of communication, the possible values of $X$ are the symbols which can be sent (or received). \\
It follows that $p_{i}$ is strictly greater than $0$, and $\sum\limits_{i=1}^{N}p_{i}=1$. The mathematical model proposed by Shannon aims to construct a mathematical measure of the information conveyed by a message.  To clarify this thing let consider the following preliminary example, suppose that $X$ is random discrete takes on the values $1, 2, 3, 4, 5$ with equal probability. We want to determine how much information the statement $1 < X < 3$ provides about the value of $X$. Initially, without any additional information, the probability of correctly guessing $X$ is $\frac{1}{6}$. After learning that $X$ can only be $1$ , $2$ or $3$, the probability of a correct guess increases. 
This shows that the uncertainty about the value of $X$ has decreased. The condition $1 < X < 3$ therefore reduces our uncertainty about the actual value of $X$. If we could formalize the notion of uncertainty, we would be able to measure precisely how much information this statement gives us. To measure information, let us consider the following function : 
	\[
	H : ]0,1] \to [0,+\infty[,
	\]
	
\noindent such that
	\begin{itemize}
		\item[(a)] $H(u)$ is a continuous decreasing function and $H(u)=0$ if $u=1$,
		\item[(b)] $H(uv)=H(u)+H(v)$.
	\end{itemize}
In the case where $H$ acts on a given probability $p$, it can be considered as a measure of information. Indeed, according to property $a$ in the construction of the function $H$, a lower probability corresponds to a larger amount of information. On the other hand, property $b$ shows that if we repeat the experiment, the information obtained from two independent outcomes is equal to the sum of the information obtained by considering each outcome separately.  
	
	\begin{proposition}
		Let $H$ be a given function with conditions $(a)$ and $(b)$. Then, one can write $H(x)$ as
		\[H(x)=-log_{r}(x),\]
		Where $r$ is a real that satisfies $r>1$. 
	\end{proposition}
\begin{proof}
 Let us consider the function $f(x)=H(e^{-x})$, by $(b)$ we have \[f(x+y)=H(e^{-(x+y)})=H(e^{-x}e^{-y})=H(e^{-x})+H(e^{-y})=f(x)+f(y)\]
 It follows that $f$ is a continuous additive function, thus one can write $f(x)=\lambda x$ for some $\lambda \in \mathbb{R}.$
 Putting $y = e^{-x}$, then $H(y)=f(-ln y)=-\lambda ln y.$  \\
 Since $ln y$ is an increasing function of $y$ and, according to $(a)$ $H$ is a decreasing function, we
 have $\lambda > 0$. \\By taking $\lambda = (ln r)^{-1},$ the proposition follows.
\end{proof}
\noindent Next we present the notion of \textbf{entropy} which quantify the measurement of information that we can get from a random variable $X$.  	
	\begin{definition}\label{def1}[Entropy]
		For a random variable $X$ with probabilities $p_i$, we define entropy as :
			\[
			H_{r}(X) = \sum\limits_{i=1}^{N}p_{i}H(p_{i})=-\sum\limits_{i=1}^{N} p_{i} \log_{r} (p_i)
			\]
		\end{definition}
\noindent In definition \ref{def1},the sum of information is weighted according to each corresponding probability.
\noindent For the case of binary information source that emits the symbols $X = 0$ and $X = 1$ with probabilities 
$\mathbb{P}\{X=0\} = p_0$ and $\mathbb{P}\{X=1\} = 1 - p_0$, 
the associated entropy is given by the \emph{Shannon function} or \emph{binary entropy}. It is written as:
\[
H_{2}(X)
= -\,p_0 \log_2(p_0)\;-\;(1-p_0)\log_2(1-p_0).
\]

	\section{Channel capacity}
	\noindent In what follows we discus the design of a channel communication based on the entropy concept. Let $S$ be a random variable which takes values from a finite set $I$ who refer to the input symbols set(Sending symbols). Let $R$ be a random variable which takes values from a finite set $O$ who refers to the output symbols set(Received symbols). Let $\{s_{1},s_{2},\ldots,s_{M}\}$ and  $\{r_{1},r_{2},\ldots,r_{N}\}$ the domains of $S$ and $R$ respectively.Let us consider the following probabilities:
	\[\phi_{i,j}=\mathbb{P}(R=r_{j}|S=s_{i}),\]
	\[\varphi_{i,j}=\mathbb{P}(S=s_{i}|R=r_{j}),\]
	\[\rho_{i,j}=\mathbb{P}(S=s_{i},R=r_{j}).\]
	  The probabilities $\phi_{i,j}$ is called \textbf{forwards probabilities}, $\varphi_{i,j}$ is called \textbf{backwards probabilities} and $\rho_{i,j}$ is called the \textbf{joint probabilities}.
	\begin{definition}[Input entropy/Output entropy]
    Let  $S$ be the input random variable with probabilities $\phi_i$ for $i \in \{1,\ldots,M\}$, and let $R$ be the output random variable with probabilities $\varphi_{j}$ $i \in \{1,\ldots,N\}.$
    \begin{itemize}
    	\item The \textbf{input entropy} is defined as \[H(S)=-\sum\limits_{i=1}^{M}\phi_{i}~\text{log}~ \phi_{i}\]
    	\item The \textbf{output entropy} is defined as \[H(R)=-\sum\limits_{i=1}^{N}\varphi_{i}~\text{log} ~\varphi_{i}\]
    \end{itemize}
\end{definition}
\noindent In the remain of this lecture we omit the symbol $r$ in the logarithm, but we suppose that it is the same for both definitions. Let now introduce the notions of input and output entropies. \\ 
In the fact that we know $R=r_{j}$ below, we investigate the conditional entropy under the fixed condition $R=r_{j}$. 
\[H(S|R=r_{j})=-\sum\limits_{i=1}^{M}P(S=s_{i}|R=r_{j})log(P(S=s_{i}|R=r_{j}))=-\sum\limits_{i=1}^{M}=\varphi_{i,j}log(\varphi_{i,j})\]

\noindent By averaging over $r_{j}$ in $R$, we introduce the conditional entropy, which is the average information about $S$ given $R$, denoted $H(R|S)$. It is given by:
\[
H(S|R) = -\sum_{i,j} r_j \, \varphi_{ij} \log \varphi_{ij}
\]
In the same way, we introduce the conditional entropy of the average information of $R$ knowing $S$ is given by:
\[
H(R|S) = -\sum_{i,j} s_j \, \phi_{ij} \log \phi_{ij}
\]
With these conditional probabilities, we define the \textbf{mutual information} is given by :
\[I(R, S) = H(R) - H(R|S) = H(S) - H(S|R)\] 
  The mutual information measures the amount of information that is transmitted across the channel from the input to the output for a given information source. The channel \textbf{capacity} is \[ C = \text{max} ~I(S,R) .\]
  According to Shannon, in the case where the input entropy $H(S)$ is smaller than the channel capacity $C$, Hence, it is possible to transmit information over a noisy channel with an arbitrarily small probability of error. \\ Therefore, the channel capacity $C$ precisely measures the maximum amount of information that can be reliably conveyed through the channel. 
\begin{example}[Binary symmetric channel]
A binary symmetric channel with bit error probability $p$ transmits the binary symbol $S = 0$ or $S = 1$ in a correct manner with probability $1-p$, whereas the incorrect binary symbol $R = 1$ or $R = 0$ is emitted with probability $p$
	\begin{center}
		\includegraphics[width=0.8\linewidth]{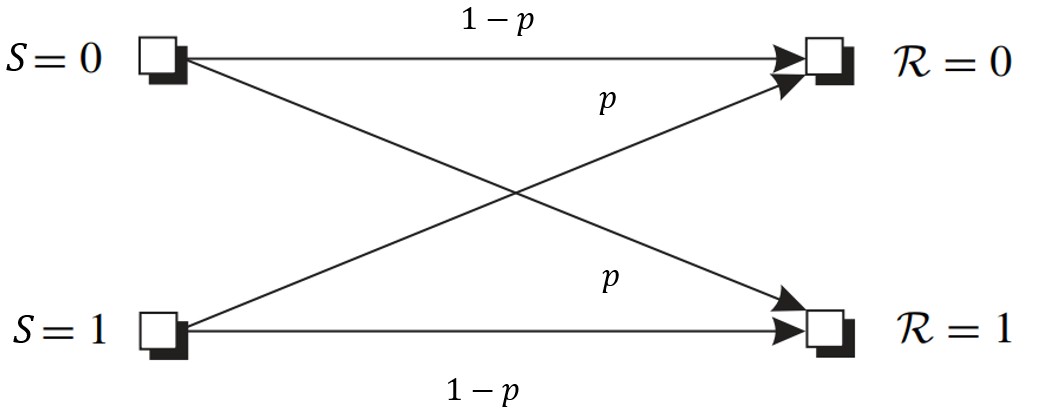}
	\end{center}

	\noindent where $p$ corresponds to the probability that a given symbol does not change. Hence, the channel capacity is 
	\[C  = 1 + p~\text{log}_{2}~p+(1-p)~\text{log}_{2}(1-p),\]\\ 
\end{example}
\noindent In order to introduce the fundamental theorem of Shannon for binary symmetric channel, we review the maximum likelihood decoding method. Let
\begin{align*}
	\mathcal{D} : R=\{r_{1},r_{2},\ldots,r_{N}\} &\to \{s_{1},s_{2},\ldots,s_{N}\}  \\
	r_{i} &\mapsto s_{\sigma(i)}
\end{align*}
 A decoding is correct is characterized by the probability 
\[\varphi_{\sigma(j),j}=\mathbb{P}(S=s_{\sigma(j)}|R=r_{j}),\]
hence the average probability $P_{\mathcal{D}}$ related to a correct decoding is 
\[P_{\mathcal{D}}=\sum\limits_{j}\varphi_{j}\varphi_{\sigma(j),j}.\]
Furthermore, the case where $\sigma$ satisfies $\phi_{\sigma(j),j} \geqslant \phi_{i,j}$  $\forall i$, we refer to the maximum likelihood decoding.
\begin{proposition}
	For a binary symmetric channel and a block code $C \subseteq \{0,1\}^n$, the maximum likelihood decoder assigns to each received vector $v$ the codeword $u \in C$ that is closest to $v$ in the Hamming metric.
\end{proposition}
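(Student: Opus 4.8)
The plan is to reduce the statement to a short computation with the channel's transition probabilities together with a monotonicity observation. Throughout I would adopt the standard standing assumption $0\le p<\tfrac12$ on the crossover probability and flag it explicitly at the outset; the boundary value $p=\tfrac12$ and the range $p>\tfrac12$ are degenerate and I would dispose of them in a closing remark. Recall also that ``using the binary symmetric channel on a block of length $n$'' means applying the single-symbol channel independently to each of the $n$ coordinates, so that the block channel is the $n$-fold product of the single-symbol one; here the input variable $S$ ranges over the codewords of $C$ and the output variable $R$ over $\{0,1\}^n$.

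First I would compute the forward probability of receiving $v\in\{0,1\}^n$ given that $u\in C$ was sent. By coordinatewise independence,
\[
\mathbb{P}(R=v\mid S=u)=\prod_{k=1}^{n}\mathbb{P}(R_k=v_k\mid S_k=u_k),
\]
and each factor equals $1-p$ when $u_k=v_k$ and $p$ when $u_k\ne v_k$. Writing $d(u,v)=\bigl|\{\,k:u_k\ne v_k\,\}\bigr|$ for the Hamming distance and collecting equal factors, this gives
\[
\mathbb{P}(R=v\mid S=u)=p^{\,d(u,v)}\,(1-p)^{\,n-d(u,v)}=(1-p)^{n}\left(\frac{p}{1-p}\right)^{d(u,v)}.
\]
Since $0\le p<\tfrac12$ we have $0\le \frac{p}{1-p}<1$ and $(1-p)^{n}>0$, so $t\mapsto (1-p)^{n}\bigl(\tfrac{p}{1-p}\bigr)^{t}$ is strictly decreasing on $\{0,1,\dots,n\}$. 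Hence for any $u,u'\in C$,
\[
\mathbb{P}(R=v\mid S=u)\ \ge\ \mathbb{P}(R=v\mid S=u')\quad\Longleftrightarrow\quad d(u,v)\ \le\ d(u',v).
\]

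The second step is to read off the conclusion from the definition of maximum likelihood decoding recalled before the statement: the decoder $\mathcal{D}$ sends $v$ to a codeword $u$ with $\phi_{u,v}\ge\phi_{u',v}$ for every $u'\in C$, i.e.\ to a codeword maximizing $\mathbb{P}(R=v\mid S=u)$, and by the displayed equivalence such a $u$ is precisely one of minimal Hamming distance to $v$. There is no deep obstacle in this argument; the only points requiring care are the non-uniqueness and the excluded range of $p$. When several codewords are equidistant from $v$, the permutation $\sigma$ in the definition makes an arbitrary but fixed choice among them, so the statement is to be understood up to this choice. When $p=\tfrac12$ all codewords are equally likely and the claim is vacuous, and when $p>\tfrac12$ the ratio $\frac{p}{1-p}$ exceeds $1$, so one must first relabel the output alphabet by $0\leftrightarrow1$ (equivalently, replace $p$ by $1-p$) to return to the standard regime; stating $p<\tfrac12$ at the beginning keeps this out of the main line of reasoning.
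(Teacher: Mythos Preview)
Your argument is correct and is the standard one: compute the forward transition probability of the $n$-fold product channel as $(1-p)^{n}\bigl(\tfrac{p}{1-p}\bigr)^{d(u,v)}$, observe that this is strictly decreasing in $d(u,v)$ when $p<\tfrac12$, and conclude that maximizing the likelihood over $u\in C$ is the same as minimizing the Hamming distance. Your handling of ties and of the degenerate regimes $p=\tfrac12$ and $p>\tfrac12$ is also appropriate.

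As for comparison with the paper: the paper states this proposition without proof, so there is no alternative argument to weigh yours against. Your write-up would serve perfectly well as the missing justification; it uses exactly the paper's definition of maximum likelihood decoding via the forward probabilities $\phi_{i,j}$.
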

\noindent We conclude this section by introducing the Shannon's noisy channel coding theorem.
\begin{theorem}
	Let $\delta$ be any small positive real number and let $R$ be a positive rate satisfying $R < C$. For all sufficiently large block lengths $n$, there exists a code of length $n$ and rate $R$ such that, under maximum likelihood decoding, the probability $P_{\mathcal{D}}$ of correct decoding exceeds $1 - \delta$.
\end{theorem}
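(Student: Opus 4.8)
The plan is to prove Shannon's noisy channel coding theorem for the binary symmetric channel by the \emph{random coding} argument, showing that a code drawn uniformly at random performs well \emph{on average}, hence at least one good code exists. Fix the crossover probability $p < 1/2$ (so that $C = 1 + p\log_2 p + (1-p)\log_2(1-p) > 0$), fix $R < C$, and fix $\delta > 0$. Choose a small $\varepsilon > 0$ with $R < 1 - H_2(p+\varepsilon) \le C$ using continuity of the binary entropy function $H_2$ from Definition~\ref{def1}. For block length $n$, set $M = \lceil 2^{Rn}\rceil$ and generate codewords $u_1,\ldots,u_M \in \{0,1\}^n$ independently and uniformly at random. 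Transmit $u_1$ (by symmetry of the construction the analysis does not depend on which codeword is sent), receive $v$, and decode by nearest-neighbour (maximum likelihood) decoding, which is the correct rule by the Proposition immediately preceding the theorem.

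The key steps, in order, are the following. First I would bound the probability of the \textbf{typical-error event}: by the law of large numbers (or a Chernoff/Hoeffding bound) the Hamming weight of the error vector $e = v - u_1$ satisfies $\mathrm{wt}(e) \le (p+\varepsilon)n$ except with probability tending to $0$ as $n \to \infty$; call this bad event $E_0$ and note $\mathbb{P}(E_0) < \delta/2$ for $n$ large. Second, conditioned on $\mathrm{wt}(e) \le (p+\varepsilon)n$, a decoding error can only occur if some \emph{other} codeword $u_k$, $k \ne 1$, lies in the Hamming ball $B(v, (p+\varepsilon)n)$. For a fixed $k$, since $u_k$ is uniform on $\{0,1\}^n$ and independent of $v$,
\[
\mathbb{P}\bigl(u_k \in B(v,(p+\varepsilon)n)\bigr) \;=\; \frac{|B(v,(p+\varepsilon)n)|}{2^n} \;=\; 2^{-n}\sum_{i=0}^{\lfloor(p+\varepsilon)n\rfloor}\binom{n}{i} \;\le\; 2^{-n\,(1 - H_2(p+\varepsilon))},
\]
where the last inequality is the standard entropy estimate for the volume of a Hamming ball. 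Third, a union bound over the $M-1 < 2^{Rn}$ competing codewords gives that the probability of this event is at most $2^{Rn} \cdot 2^{-n(1 - H_2(p+\varepsilon))} = 2^{-n(1 - H_2(p+\varepsilon) - R)}$, which $\to 0$ since $1 - H_2(p+\varepsilon) - R > 0$ by our choice of $\varepsilon$; hence it is $< \delta/2$ for $n$ large. Combining, the average (over the random code and the channel noise) probability of \emph{incorrect} decoding is $< \delta$, so $\mathbb{E}[1 - P_{\mathcal{D}}] < \delta$.

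Finally I would \emph{derandomize}: since the expected error probability over the ensemble is below $\delta$, there must exist at least one specific code $C \subseteq \{0,1\}^n$ of rate (at least) $R$ whose average error probability under maximum likelihood decoding is below $\delta$, i.e.\ $P_{\mathcal{D}} > 1 - \delta$, which is exactly the claim. (If one wants $P_{\mathcal{D}} > 1-\delta$ for every codeword rather than on average, one applies the standard expurgation step: discard the worst half of the codewords, losing only one bit of rate, which is absorbed by slightly shrinking $R$ at the outset.) The main obstacle — and the step deserving the most care — is the second one: making precise that under nearest-neighbour decoding the only sources of error are the large-weight noise event $E_0$ and the ``another codeword is at least as close'' event, and correctly handling ties in the Hamming distance (which can be broken arbitrarily without affecting the bound, since ties occur with the competing-codeword event already accounted for). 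The concentration bound in the first step and the ball-volume estimate in the third are routine, but they are where the hypothesis $p < 1/2$ and the definition of $C$ enter decisively.
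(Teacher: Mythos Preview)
The paper does not actually prove this theorem: it is stated as the concluding result of the section (``We conclude this section by introducing the Shannon's noisy channel coding theorem'') and is followed immediately by the bibliography, with no proof or proof sketch given. There is therefore nothing in the paper to compare your argument against.

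That said, your proposal is the classical random-coding proof for the binary symmetric channel and is correct as outlined. The three ingredients --- concentration of the error weight near $pn$, the entropy bound $\sum_{i\le (p+\varepsilon)n}\binom{n}{i}\le 2^{nH_2(p+\varepsilon)}$ for the Hamming-ball volume, and the union bound over competing codewords --- combine exactly as you describe, and the averaging/expurgation step at the end is the standard way to pass from the ensemble to a single deterministic code. One cosmetic point: in your choice of $\varepsilon$ you write $R < 1 - H_2(p+\varepsilon)\le C$; in fact $1-H_2(p+\varepsilon) < C$ strictly since $H_2$ is increasing on $[0,1/2]$, but this is harmless. You might also remark that the randomly generated list of $M$ words could in principle contain repeats, which would lower the effective rate; this is either absorbed into the expurgation step or handled by noting that repeats occur with vanishing probability, and in any case does not affect the conclusion.
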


\end{document}